\documentclass[conference,a4paper]{IEEEtran}
\IEEEoverridecommandlockouts
\usepackage[T1]{fontenc}
\usepackage[utf8]{inputenc}
\usepackage{microtype}
\usepackage{cite}
\usepackage{amsmath,amssymb,amsfonts}
\usepackage{amsthm}
\usepackage{graphicx}
\usepackage{xcolor}
\usepackage{hyperref}
\usepackage{orcidlink}
\usepackage{balance}

\definecolor{PaperLinkBlue}{HTML}{0057FF}
\definecolor{NeonRefGreen}{HTML}{39FF14}
\hypersetup{
  colorlinks=true,
  linkcolor=PaperLinkBlue,
  urlcolor=PaperLinkBlue,
  citecolor=NeonRefGreen,
  pdfauthor={Markus Baumann, Michael Poppel, Thomas Gabor, Maximilian Zorn, Claudia Linnhoff-Popien, and Jonas Stein},
  pdftitle={Exploiting Symmetry in Quantum Reservoir Computing},
  pdfsubject={A concise architecture and evaluation of symmetry-aware quantum reservoir computing}
}

\graphicspath{{gfx/}}
\newcommand{\Tr}{\operatorname{Tr}}
\newtheorem{proposition}{Proposition}
\newtheorem{lemma}{Lemma}

\begin{document}

\title{Exploiting Symmetry in Quantum Reservoir Computing}

\author{%
\IEEEauthorblockN{%
Markus Baumann\IEEEauthorrefmark{1}\orcidlink{0009-0007-3575-1006}\thanks{Correspondence should be addressed to \href{mailto:markus.baumann@campus.lmu.de}{markus.baumann@campus.lmu.de}.},
Michael Poppel\IEEEauthorrefmark{1}\orcidlink{0009-0005-1141-0974},
Thomas Gabor\IEEEauthorrefmark{2}\orcidlink{0000-0003-2048-8667},\\[-0.2ex]
Maximilian Zorn\IEEEauthorrefmark{1}\orcidlink{0009-0006-2750-7495},
Claudia Linnhoff-Popien\IEEEauthorrefmark{1}\orcidlink{0000-0001-6284-9286},
and Jonas Stein\IEEEauthorrefmark{1}\orcidlink{0000-0001-5727-9151}
}
\IEEEauthorblockA{\IEEEauthorrefmark{1}\textit{QAR-Lab, Department of Computer Science, LMU Munich, Munich, Germany}}
\IEEEauthorblockA{\IEEEauthorrefmark{2}\textit{Department of Computer Science, University of Exeter, Exeter, United Kingdom}}
}
\maketitle
\thispagestyle{empty}
\pagestyle{empty}

\begin{abstract}
Quantum reservoir computing (QRC) uses a quantum processor without training
it. The input is encoded into a quantum state, a fixed random circuit evolves
it, selected observables are measured, and only a simple linear readout is
trained on the measured values. We study QRC for forecasting on a ring of
sensors, such as weather stations around a circle of latitude. On such a ring,
the same physical rule governs every position, so a model that respects this
symmetry can learn one shared local rule from all sensors at once instead of a
separate rule per sensor. This is especially valuable when data is scarce.
Because the readout sees only the measured numbers, the symmetry can be lost
even when the quantum state respects it. We show how to preserve the symmetry by measuring every observable
together with all its shifted copies and by using one shared rule for encoding,
circuit, and readout, and we prove that this construction is sufficient. In a
controlled audit the aligned design consistently outperforms misaligned
alternatives, and the advantage persists in noisy simulations, on IBM hardware,
and on real weather data.

\end{abstract}

\begin{IEEEkeywords}
quantum reservoir computing, symmetry, equivariance, quantum machine learning
\end{IEEEkeywords}

\section{Introduction}
\label{sec:introduction}
A common way to use near-term quantum hardware for learning is to keep the
quantum part fixed and train only a classical model on its outputs
\cite{preskill2018quantum,biamonte2017quantum}. Quantum reservoir computing
(QRC) is such a scheme. The input is encoded into a quantum state, a fixed
circuit (the reservoir) transforms it, chosen observables are measured, and
a linear readout is trained on the measured values
\cite{fujii2017harnessing,ghosh2019quantum,nakajima2019boosting}. Skipping
variational circuit training is attractive, but it has a sharp consequence.
The trained part of the model never sees the quantum state, only the
finitely many numbers that were measured.

\begin{figure}[!t]
    \centering
    \includegraphics[width=\columnwidth]{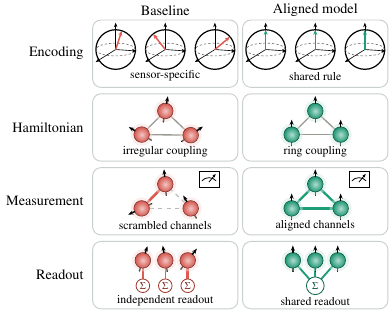}
    \caption{The four stages of a QRC model where the ring symmetry can be
    kept or broken. In the aligned model (right), every sensor follows the
    same rule: a shared input encoding, identical couplings around the ring,
    matching observables at every position, and one readout for all sensors.
    In the baseline (left), each stage is instead sensor-specific.}
    \label{fig:pipeline}
\end{figure}

We study this constraint for fields on a ring of $K$ sensors, such as
stations around a circle of latitude. On a homogeneous ring, if the input
pattern moves by $s$ sensors, the correct forecast moves by $s$ sensors as
well. Writing $S_s$ for this shift, the requirement reads
$F(S_sx)=S_sF(x)$. Such a model is called \emph{equivariant}. Equivariance
is valuable on small data because whatever the model learns at one position
applies at every other. In variational quantum learning, symmetry is built
into trainable embeddings, circuits, or measurements
\cite{meyer2022exploiting,schatzki2024theoretical,nguyen2022theory}. In QRC
only the readout is trained, so the symmetry must also survive the
measurement step that connects the reservoir to that readout.

Figure~\ref{fig:pipeline} shows the four places where the shared cyclic
rule may be kept or broken across encoding (E), reservoir dynamics (H), measured
observables (O), and readout (R). Our claim is about this architecture, not
all of QRC. Keeping the rule at all four stages is provably sufficient for
equivariance, and in our experiments the stages help mostly in combination.
Other architectures may achieve the same protection differently.

\emph{Contributions.}
\begin{itemize}
\item We identify the measurement/readout interface as a place where QRC
loses symmetry, and we separate the two ways this happens: shifted copies
of an informative observable are never measured, or they are measured but
attached to the wrong sensors.
\item We show how to build the pipeline so that neither failure occurs:
when encoding, dynamics, measurements, and readout all follow the same
shift symmetry, the model is provably shift-equivariant.
\item We test all $2^4$ on/off combinations of these four stages, plus
matched controls, on three synthetic systems, a dissipative reservoir,
finite-shot IBM hardware, and a limited WeatherBench~2 experiment
\cite{rasp2024weatherbench2,hersbach2020era5}. The gap between the fully
aligned and the fully untied model shows the size of the combined effect.
\end{itemize}

\section{Related Work and Positioning}
\label{sec:related}
Reservoir computing trains a linear readout on top of a fixed nonlinear
dynamical system \cite{maass2002real,jaeger2004harnessing,lukosevicius2009reservoir}.
QRC keeps this split but replaces the classical reservoir state with measured
expectation values of a quantum system
\cite{fujii2017harnessing,ghosh2019quantum,nakajima2019boosting}. Recent QRC
work studies reservoir design, repeated input injection, coherence,
measurement selection, and temporal processing
\cite{mujal2023timeseries,ahmed2024prediction,hu2024coherence}. This work
mostly treats measurements as a source of rich features. We study their
limiting role instead: since only the readout is trained, everything the
model learns about the quantum system must pass through the measured
values.

Building a known symmetry into a model is a standard way to learn more from
limited data, first in classical learning
\cite{lecun1998gradient,cohen2016group,bronstein2017geometric} and now in
variational quantum models through equivariant embeddings, circuits,
kernels, and measurements
\cite{meyer2022exploiting,schatzki2024theoretical,nguyen2022theory}. All of
these methods shape the parts of the model that are trained, so they do not
directly transfer to QRC, where the quantum side is fixed.

Measuring an observable together with all its shifted copies, which we call
observable-orbit completion, is in spirit a standard group-orbit
construction rather than a new group-theoretic result. The completed
channels span the smallest space of observables containing the seeds and
closed under shifts. Barbosa et al.\
align a classical reservoir with task symmetry \cite{barbosa2021symmetry}.
The difference in QRC is that a classical readout can inspect the whole
reservoir state, while a quantum readout sees only the selected
measurements, so symmetry can be lost at an interface classical reservoirs
lack in this form. Sannia et al.\ show that symmetry can reduce feature
concentration in QRC \cite{sannia2025concentration}. We ask a
complementary question: are the shifted copies of the relevant observables
measured at all, and are they attached to the sensors in a way a shared
readout can use?

Three things are new here: the diagnosis of the two failure modes at the
measurement/readout interface, an architecture whose four stages can be
switched on and off independently, and matched validation across
simulation, hardware, and real weather data.

\section{Symmetry at the Measured Interface}
\label{sec:interface}
\begin{figure*}[!t]
    \centering
    \includegraphics[width=\textwidth]{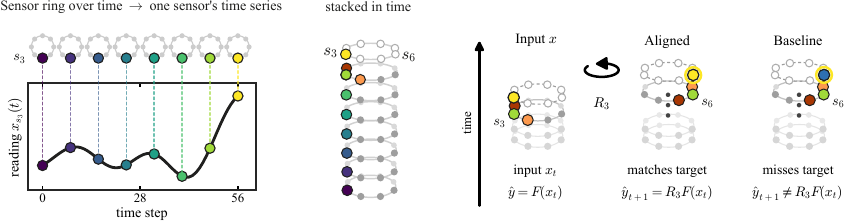}
    \caption{Sensing geometry and the effect of alignment on representative
    simulated $K=8$ advection data. The left panel shows snapshots of the sensor ring and the
    resulting time series at sensor $s_3$. The middle panel shows the same snapshots stacked
    in time. The right panel shows that when a local pattern reappears shifted by three sensors,
    the desired prediction shifts by the same amount.}
    \label{fig:geometry}
\end{figure*}

The model receives one snapshot $x_t\in\mathbb{R}^K$, one number per
sensor, and predicts each sensor's change in the next step. It outputs the
residual $r_t=x_{t+1}-x_t$ and returns $\hat x_{t+1}=x_t+\hat r(x_t)$. With
sensor indices modulo $K$, let $(S_sx)_i=x_{i-s}$ denote the field shifted
by $s$ positions. Because the ring is homogeneous, the prediction rule
should obey
\begin{equation}
    \hat r(S_sx)=S_s\hat r(x).
    \label{eq:desired-equivariance}
\end{equation}
Figure~\ref{fig:geometry} gives the picture. If a local pattern reappears a
few sensors further along the ring, its prediction should reappear there
too. This property is called equivariance. Its practical value is data
efficiency. Whatever the model learns at one sensor applies at every other,
so all $K$ sensors contribute to learning one shared rule. This only works
if every sensor's features mean the same thing. Otherwise pooling them
mixes incomparable numbers.

After encoding and the fixed reservoir evolution, write the quantum state
as $\rho(x)$. Each sensor $i$ is assigned a block of measured features
\begin{equation}
    \phi_i(x)_a=\Tr[O_{i,a}\rho(x)],
    \label{eq:local-feature}
\end{equation}
the expected outcomes of measuring $O_{i,a}$ in the state $\rho(x)$. A
shared local readout predicts $\hat r_i(x)=w^\top\phi_i(x)+b$,
with the same weights $w$ and bias $b$ at every sensor. The readout never
sees $\rho(x)$. It sees only these finitely many numbers and, being linear,
can output nothing but weighted sums of them. Two distinct failures follow.

\emph{Missing shifted observables.} Suppose a local signal shows up in
$\langle Z_0\rangle$ and the measurement list contains $Z_0$ but not its
shifted copy $Z_1$. When the input moves by one sensor, the signal moves to
$\langle Z_1\rangle$, which is never recorded. In general, a shifted copy
of a relevant quantity can lie outside the span of the recorded features,
and no linear readout can reconstruct a number that was never recorded.
Task-specific correlations may partly compensate, but nothing guarantees
it. We call this the \emph{measured-span obstruction}.

\emph{Measured but wrongly attached.} In the second failure mode,
everything relevant is measured, but the values are attached to the wrong
sensors. All the information is still there, and a large model that reads
all sensors at once could still find it. A shared rule cannot, because it
relies on each sensor's values meaning the same thing, and after the
mix-up they do not. The model does not lose information here, it loses the
ability to learn one rule from all sensors together. The dense-readout and
orthogonal-recombination controls isolate this case.

The repair fixes both problems in one step. Decide on the observables
once, at one reference sensor, and let every sensor measure its own
shifted copy of each of them. For a seed $O$, sensor $i$ measures
\begin{equation}
    O_i=T_iOT_i^\dagger,
    \qquad i=0,\ldots,K-1,
    \label{eq:observable-orbit}
\end{equation}
where the unitary $T_i$ moves the qubits the same way the shift moves the
sensors. We call this \emph{observable-orbit completion}, because the
shifted copies of a seed form its orbit under the shift group. Now all
sensors measure the same quantities in matching order, so nothing is
missing and nothing sits in the wrong place.

\begin{proposition}[The aligned model is shift-equivariant]
\label{prop:shared-readout}
Suppose the feature blocks move with the input: after a shift by $s$, each
sensor receives the block that the sensor $s$ positions back received
before, $\phi_i(S_sx)=\phi_{i-s}(x)$. If the same readout
$\hat r_i(x)=w^\top\phi_i(x)+b$ is used at every sensor, then
$\hat r(S_sx)=S_s\hat r(x)$, and the one-step forecast $x+\hat r(x)$ shifts
with the input as well.
\end{proposition}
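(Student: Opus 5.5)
The plan is a direct componentwise computation. The hypothesis $\phi_i(S_sx)=\phi_{i-s}(x)$ does all the work, and the shared readout lets us carry it through to the outputs. I will fix an arbitrary shift $s$, an arbitrary input $x\in\mathbb{R}^K$, and an arbitrary sensor index $i$, with all indices read modulo $K$. The goal is to show that the $i$-th component of $\hat r(S_sx)$ equals the $i$-th component of $S_s\hat r(x)$. By the definition $(S_sy)_i=y_{i-s}$, that means showing it equals $\hat r_{i-s}(x)$.

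The first step applies the readout at sensor $i$ to the shifted input and substitutes the hypothesis:
\begin{equation*}
\hat r_i(S_sx)=w^\top\phi_i(S_sx)+b=w^\top\phi_{i-s}(x)+b .
\end{equation*}
The second step uses that the same $w$ and $b$ are used at every sensor. The right-hand side is then exactly the readout at sensor $i-s$, namely $\hat r_{i-s}(x)=(S_s\hat r(x))_i$. Since $i$ was arbitrary, $\hat r(S_sx)=S_s\hat r(x)$.

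For the forecast, $S_s$ is linear, being a permutation of coordinates. Hence
\begin{equation*}
S_sx+\hat r(S_sx)=S_sx+S_s\hat r(x)=S_s\bigl(x+\hat r(x)\bigr),
\end{equation*}
so the one-step forecast map $x\mapsto x+\hat r(x)$ is also shift-equivariant.

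There is no substantive obstacle; the main care is bookkeeping. The sign convention in $(S_sx)_i=x_{i-s}$ must match the hypothesis $\phi_i(S_sx)=\phi_{i-s}(x)$, so that the index $i-s$ produced by the features is the same index that appears in the definition of the shift. The argument also uses the sharing of $w$ and $b$ in an essential way. With sensor-specific weights $w_i$, the computation would give $w_i^\top\phi_{i-s}(x)+b_i$, which is not $\hat r_{i-s}(x)$; this is the wrongly-attached failure mode. I would add a remark that the hypothesis on the feature blocks is what observable-orbit completion \eqref{eq:observable-orbit} is designed to supply, together with a shift-covariant encoding and a reservoir commuting with the $T_s$. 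Verifying that hypothesis belongs to a separate lemma and is not part of this proposition.
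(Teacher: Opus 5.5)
Your proof is correct and matches the paper's own argument, which is the prose version of your componentwise computation: after the shift, sensor $i$ sees the features of sensor $i-s$ and applies the same $w$ and $b$, so it outputs $\hat r_{i-s}(x)=(S_s\hat r(x))_i$. Your explicit check of the forecast, $S_sx+S_s\hat r(x)=S_s(x+\hat r(x))$ via linearity of $S_s$, spells out a step the paper leaves implicit.
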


\begin{proof}
After the shift, each sensor sees exactly the numbers that the sensor $s$
positions back saw before, and it applies the same weights and bias to
them. It therefore outputs the prediction that sensor made before. Every
prediction moves by $s$ positions, which is the claim.
\end{proof}

\begin{lemma}[The aligned pipeline shifts the features correctly]
\label{lem:feature-shift}
Let $T_s$ be the qubit relabeling induced by the input shift $S_s$. Assume
the reservoir state shifts along,
\begin{equation}
    \rho(S_sx)=T_s\rho(x)T_s^\dagger,
    \label{eq:state-covariance}
\end{equation}
and that the measured observables come in the complete orbits of
Eq.~\eqref{eq:observable-orbit}. Then the feature blocks move with the
input, $\phi_i(S_sx)=\phi_{i-s}(x)$.
\end{lemma}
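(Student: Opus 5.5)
We compute the feature directly from its definition in Eq.~\eqref{eq:local-feature} and push the relabeling unitary from the state onto the observable using cyclicity of the trace. For a fixed sensor $i$ and seed index $a$, substituting the state covariance \eqref{eq:state-covariance} gives
\begin{equation*}
\phi_i(S_sx)_a=\Tr\!\left[O_{i,a}\,T_s\rho(x)T_s^\dagger\right]=\Tr\!\left[T_s^\dagger O_{i,a}T_s\,\rho(x)\right].
\end{equation*}
The claim then reduces to the operator identity $T_s^\dagger O_{i,a}T_s=O_{i-s,a}$. Once this identity holds, the right-hand side is $\Tr[O_{i-s,a}\rho(x)]=\phi_{i-s}(x)_a$ for every $a$, which is exactly $\phi_i(S_sx)=\phi_{i-s}(x)$.

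To establish the operator identity, we use the orbit structure \eqref{eq:observable-orbit}. Each seed $O_a$ defines $O_{i,a}=T_iO_aT_i^\dagger$. Since $T_s$ is the qubit relabeling induced by the cyclic shift, the map $s\mapsto T_s$ should be a representation of $\mathbb{Z}_K$, so that $T_iT_j=T_{i+j}$ with indices mod $K$, $T_0=I$, and $T_s^\dagger=T_{-s}$. With these rules,
\begin{equation*}
T_s^\dagger O_{i,a}T_s=T_{-s}T_iO_aT_i^\dagger T_{-s}^\dagger=T_{i-s}O_aT_{i-s}^\dagger=O_{i-s,a}.
\end{equation*}
Completeness of the orbit is used here: $O_{i-s,a}$ is one of the recorded observables and sits in block $i-s$ in the same slot $a$. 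If the orbit were incomplete, or if the copies were permuted within blocks, this step would fail. The two failure modes discussed above correspond to these two possibilities.

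The main obstacle is justifying the homomorphism property $T_iT_j=T_{i+j}$, together with the consistency of conventions. Concretely, we must check that the unitary used in the state covariance and the one used to build the orbit are the same family, and that they shift in the same direction as $(S_sx)_i=x_{i-s}$. Otherwise the identity comes out as $O_{i+s,a}$ rather than $O_{i-s,a}$.

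To settle this, we will take $T_s$ to be the permutation operator $T_s|q_0\ldots q_{K-1}\rangle=|q_{-s}\ldots q_{K-1-s}\rangle$, which moves the qubit at site $j$ to site $j+s$. Permutation operators compose as the underlying permutations compose, and cyclic shifts compose additively mod $K$, so the homomorphism property follows. Under this choice, the seed $Z_0$ is sent to $T_iZ_0T_i^\dagger=Z_i$, which matches the example given earlier. In the general case, where each sensor carries several qubits, the same argument applies block-wise, since $T_s$ permutes whole sensor blocks. We should also note that $T_s$ is a genuine permutation unitary and not merely defined up to a phase; a phase would cancel under conjugation anyway.
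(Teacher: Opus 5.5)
Your proposal is correct and follows the same route as the paper: substitute the state covariance, cycle $T_s$ onto the observable inside the trace, and identify $T_s^\dagger O_{i,a}T_s$ with the orbit copy $O_{i-s,a}$. Your explicit check that the cyclic permutation unitaries satisfy $T_iT_j=T_{i+j}$, with the same shift direction as $S_s$, spells out the relabeling step that the paper states in a single phrase.
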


\begin{proof}
Consider one measured number, the expectation of the copy $O_i$ of some
seed. Its value in the shifted state is
\begin{equation*}
\Tr[O_i\,T_s\rho(x)T_s^\dagger]
=\Tr[(T_s^\dagger O_iT_s)\,\rho(x)]
=\Tr[O_{i-s}\,\rho(x)],
\end{equation*}
where the first step moves $T_s$ around the trace and the second uses that
relabeling the copy at sensor $i$ by $T_s^\dagger$ gives the copy at sensor
$i-s$. So every number measured after the shift was already measured before
the shift, $s$ sensors back, which is the claimed relocation of blocks.
\end{proof}

Lemma~\ref{lem:feature-shift} says the aligned quantum pipeline delivers
features that move correctly with the input.
Proposition~\ref{prop:shared-readout} turns that into an equivariant
prediction. The audit below tests whether this easily stated condition
identifies a design bottleneck that matters in practice.

\section{Architecture and Audit}
\label{sec:method}
\subsection{Architecture and four switches}
Every model predicts a one-step residual on a cyclic field using one qubit
per sensor. Sensor $i$ is encoded from the five-site spatial patch
\begin{equation}
p_{t,i}=(x_{t,i-2},x_{t,i-1},x_{t,i},x_{t,i+1},x_{t,i+2}),
\end{equation}
with indices modulo $K$. This choice matches how the test systems behave:
the next change at a sensor is driven mainly by its close neighborhood, so
each qubit is shown its own sensor together with two neighbors on each
side. A fixed two-layer random reservoir circuit follows. The only trained
component is ridge regression, linear regression with a penalty on large
weights. The audit toggles four independently
defined switches, illustrated in Fig.~\ref{fig:pipeline}.

The circuit starts from the uniform product state $|+\rangle^{\otimes K}$.
Each layer turns the local patch into single-qubit rotation angles, then
applies a transverse-field-Ising-style block with local fields and
couplings to first and second neighbors. All angles, fields, and couplings
are drawn once per paired reservoir seed and never optimized. The
experiment therefore changes how much structure is shared across space, not
how much quantum training is done.

\emph{Encoding (E)} uses one patch-to-angle rule at every sensor when
active, and an independent rule per sensor otherwise. \emph{Dynamics (H)}
ties the local fields and nearest/next-nearest couplings around the ring
when active. The inactive version draws position-specific parameters. In
both cases the reservoir stays fixed and random. \emph{Observables (O)}
sets how the measured channels are assigned to sensors. When active, every
sensor measures the same observables as the reference sensor, shifted to
its own position. When inactive, just as many channels are measured in
total, but they are assigned to the sensors in one fixed scrambled order,
so no sensor measures the observables it would have under alignment.
Scrambles that happen to equal a cyclic shift are rejected. \emph{Readout
(R)} pools the data of all sensors and time steps into one shared ridge
model when active, and fits a separate model per sensor otherwise.

Each sensor receives 61 features: one constant bias plus 60 Pauli
expectation values, of which 15 involve a single qubit, 28 involve two
qubits, eight involve three, and nine are averages of an observable taken
around the whole ring. The O switch therefore changes only which sensor
gets which channels, not how many there are, how many qubits they touch,
or which Pauli operators appear. We evaluate all $2^4=16$ on/off
combinations, from the fully untied 0000 to the fully aligned 1111. All of
them measure equally many channels, so none wins by measuring more. Only
the number of trained parameters differs, and deliberately so: whether
sharing one model helps is exactly what the R switch tests.

\subsection{Digital audit and realistic controls}
The main audit uses exact statevector expectation values, so no shot noise
enters. It covers three periodic systems. These are constant-speed advection, where a
pattern simply travels around the ring, Lorenz-96 with forcing $F=8$
\cite{lorenz1996predictability}, and the spatiotemporally chaotic
Kuramoto-Sivashinsky equation
\cite{kuramoto1978diffusion,sivashinsky1977nonlinear}. Each system runs at
$K\in\{8,10,12\}$, giving nine settings. The data-scarce audit uses 100
training, 80 validation, and 160 test steps with 100 paired reservoir seeds
per setting. Preprocessing statistics come from training data only.

Paired means that within one cell all 16 configurations reuse the same
trajectory split, reservoir draw, and ridge grid, so each switch is
compared within-seed rather than across unrelated random models. Four
controls move beyond the extreme 0000 endpoint.

\begin{itemize}
\item Partial configurations, with only some of the four switches active,
represent realistic systems that already preserve part of the spatial
structure.
\item A dense readout lets every output sensor use all $K\times61$ measured
channels at once. If it performs well on scrambled features, the
information is still there and only its organization is wrong.
\item Rotation augmentation trains the fully untied model on all $K$
rotated copies of every training input. This tests whether showing the
symmetry through the data can replace building it into the model, at $K$
times the reservoir-evaluation cost.
\item With encoding, dynamics, and readout aligned, we compare aligned,
scrambled, and random choices of the measurements, from 4 to 61 channels
per sensor, always with the same channel count, locality, Pauli weight,
and alphabet. A further control, the orthogonal recombination, mixes the
measured values without losing any information and so changes only their
coordinates.
\end{itemize}

These controls separate the two failure modes introduced above. The fixed
scramble and the orthogonal recombination lose no information and only
attach it to the wrong coordinates, so any damage they cause to the shared
readout reflects the second failure mode, broken organization. A random
map, in contrast, can genuinely leave out shifted copies of informative
observables and so create the first, a true measured-span obstruction. We
use that term only for this case: a gap between the aligned and the
scrambled model does not by itself mean information was missing.

\begin{figure*}[!t]
    \centering
    \includegraphics[width=\textwidth]{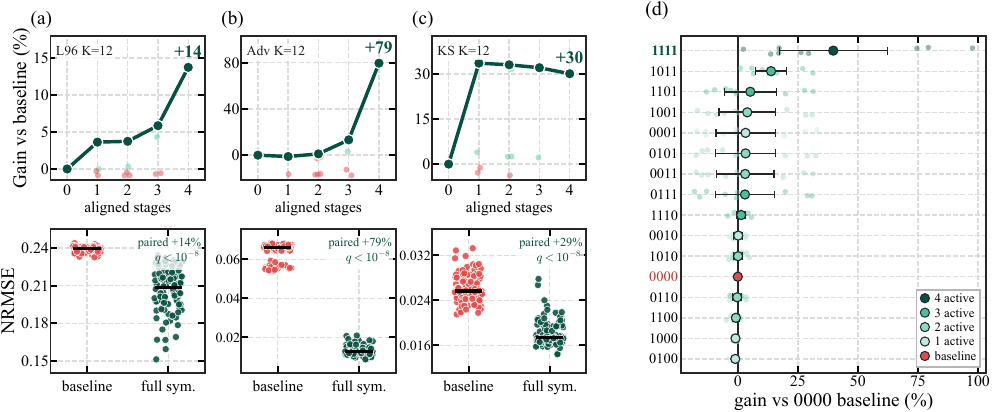}
    \caption{Paired fixed-budget E/H/O/R audit. All configurations use 61
    features per sensor. Panels (a) to (c) show the best gain at each number of
    aligned stages and paired 0000-versus-1111 test NRMSE for representative
    Lorenz-96, advection, and Kuramoto-Sivashinsky settings. Faint dots show
    individual configurations or seeds. Panel (d) ranks all 16 configurations
    by mean paired gain across nine settings. Error bars are 95\% confidence
    intervals. The $+40.1\%$ endpoint measures the combined full-pipeline
    effect. Partial and matched controls determine practical relevance.}
    \label{fig:audit}
\end{figure*}

\subsection{Metrics and statistical units}
For test targets $y_{t,i}$ and predictions $\hat y_{t,i}$, we report
\begin{equation}
\mathrm{NRMSE}=
\sqrt{\frac{\sum_{t,i}(\hat y_{t,i}-y_{t,i})^2}
{\sum_{t,i}(y_{t,i}-\bar y)^2}},
\end{equation}
the prediction error relative to always predicting the mean, and the paired
gain of model $m$ over baseline $b$,
\begin{equation}
\mathrm{gain}(m,b)=100\%
\frac{\mathrm{NRMSE}_b-\mathrm{NRMSE}_m}{\mathrm{NRMSE}_b},
\end{equation}
the percentage by which $m$ reduces the error of $b$. The ridge penalty is
selected on validation data from the same logarithmic grid $[10^{-8},10^4]$
for every model. Confidence intervals resample paired reservoir seeds.
Two-sided Wilcoxon signed-rank tests use Benjamini-Hochberg correction
within each experiment. A result is called robust-positive only when its
mean and 95\% bootstrap lower bound are positive, the adjusted $q<0.05$,
and at least 70\% of seed pairs improve.

\subsection{Support settings}
Three further settings probe robustness beyond exact digital simulation.
A physically different dissipative spin-ring reservoir
spans dephasing rates $\gamma\in\{0,0.03,0.08\}$ and exact, 1024-shot, and
4096-shot features. It covers the same nine settings at 400 training steps
and adds $K=16$ diagnostics. The IBM pilot uses $K=8$ advection, a 12/6/12
train/validation/test split, 10 paired reservoir seeds, five fixed
scrambled assignments, and 256 shots per circuit. Aligned and scrambled
models are built from the same raw counts, holding shot noise and
calibration drift fixed.

The real-data stress test uses WeatherBench~2 and ERA5
\cite{rasp2024weatherbench2,hersbach2020era5}. It uses 500-hPa geopotential
height sampled every six hours from January to May 2015, with a 24-hour
forecast horizon. Five latitude rings each carry eight equally spaced
longitude sensors. Training prefixes range from 10 to 80 steps.

\section{Results}
\label{sec:results}
\subsection{Full-pipeline magnitude}
Relative to the deliberately fully untied 0000 endpoint, the fully aligned
1111 model improves test NRMSE in all nine digital settings, with a mean
paired gain of $40.1\%$ (Fig.~\ref{fig:audit}). This is the full cross-stage
interaction within the factorial audit, not improvement over every QRC
architecture. The controls below establish practical relevance. Activating
E, H, O, or R alone gives $-0.4\%$, $-0.4\%$, $+0.6\%$, and $+3.8\%$, and
E+H gives $-0.1\%$. The strongest three-switch model, E+O+R, reaches
$+14.4\%$. E+H+O reaches only $+2.1\%$. Thus 1111 ranks first among all 16,
showing synergy in this construction rather than universal necessity of
these four switches. Gains are largest on advection, remain substantial on
Kuramoto-Sivashinsky, and are smaller but positive on Lorenz-96.

\subsection{Practical and mechanism controls}
A dense readout over all $K\times61$ channels gains $+11.9\%$, versus
$+39.8\%$ for alignment under the same refit. Dense fits on aligned and
scrambled features differ by at most $5\times10^{-8}$ NRMSE, confirming
that the scramble permutes rather than removes information. With only 100
training steps, however, the unstructured readout exploits it less
effectively. Rotation augmentation reaches $+37.9\%$, but costs $K$
reservoir evaluations per training input and remains between 5 and 10 points behind
alignment on every Lorenz-96 setting. Applied to misaligned shared features,
it reaches only $+7.9\%$.

In the equal-budget sweep, orbit-aligned measurements beat matched random
local maps for every task and budget from 4 to 61 features per sensor, and all 45
combinations of task, size, and budget are robust-positive. A span-preserving orthogonal
recombination leaves dense-readout performance unchanged but cuts the shared
gain from $+39.8\%$ to $+2.3\%$. Thus missing orbit elements can obstruct
representation, whereas span-preserving scrambles primarily break
shared-readout compatibility and sample efficiency.

\subsection{Noise, hardware, and approximate symmetry}
Across the dissipative reservoir's dephasing/shot-noise grid, gains range
from $+4.4\%$ on Kuramoto-Sivashinsky to $+30.0\%$ on advection. The
$K=16$ diagnostics reach $+55.8\%$. At 400 training steps, all 81
combinations of task and noise are robust-positive. On IBM hardware, reuse of the same
raw counts lowers mean NRMSE from 0.084 to 0.057, a $+31.8\%$ gain
($95\%$ CI $[+22.1,+41.9]\%$), with all 10 seeds improving and
$p=q=0.00195$. This is a feasibility check, not a quantum-advantage claim.

On WeatherBench, alignment averages $+5.6\%$ and rises from $+2.8\%$ at
80 training steps to $+11.4\%$ at 10, with all 100 pooled seed gains
positive. The narrow seed-bootstrap interval $[+5.57,+5.65]\%$ conditions
on the fixed 2015 record, rings, split, and preprocessing. It does not
measure uncertainty across years, regimes, latitude samples, or climate.

\subsection{Practical design implications}

The experiments suggest that symmetry should be treated as an end-to-end
design constraint rather than as a property of the quantum dynamics alone.
Making the encoding or reservoir symmetric is insufficient when the measured
features are not organized consistently across sensors or when separate
readouts are trained at each location. In the present architecture, the
largest gains appear only when encoding, dynamics, measurements, and readout
follow the same cyclic convention. Thus, checking symmetry only at the level
of the quantum state can give a misleading picture of the symmetry available
to the trained model.

For practitioners, the measurement/readout interface is the most important
diagnostic point. When shifted copies of relevant observables are not measured,
useful information may be absent from the classical feature space. When the
same information is measured but assigned inconsistently to sensor blocks, an
unrestricted dense readout may still recover it, but the benefit of parameter
sharing is largely lost. A practical workflow is therefore to first verify
that measured features transform consistently with the task symmetry, and
only then impose a shared readout. Observable-orbit completion provides one
systematic way to achieve this, although other symmetry-compatible feature
constructions may serve the same purpose.

The results also indicate when alignment is most likely to help. It is
particularly valuable in data-limited settings, where sharing one local rule
across all sensor positions increases the effective number of training
examples per learned coefficient. The benefit is strongest when the task
closely follows the assumed symmetry, as in advection, and smaller when the
symmetry is only approximate or the dynamics are strongly chaotic. When the
match between assumed symmetry and data is uncertain, aligned and
deliberately misaligned feature maps provide a useful empirical test of
whether the proposed symmetry is genuinely predictive.

Finally, the E/H/O/R construction is one sufficient realization that makes
each source of symmetry breaking experimentally visible. The broader design
principle is that the quantum representation, the measured coordinates, and
the classical hypothesis class should transform compatibly. This principle
is likely to be more transferable than the specific reservoir circuit or
observable library used in the experiments.

\section{Scope and Limitations}
\label{sec:limitations}
What the theory guarantees is deliberately narrow. If the quantum state
shifts whenever the input shifts, if every observable is measured together
with all its shifted copies, and if all sensors share one readout, then
shifted inputs provably lead to shifted forecasts. This is a sufficient
recipe, not one every good design must follow.

The headline $40.1\%$ gain compares the fully aligned model with a
deliberately extreme reference in which nothing is shared across sensors. Both models measure equally many observables, but
the aligned one trains far fewer parameters, because sharing is precisely
the choice under study. Whether alignment helps in realistic settings is
shown by the matched controls, not by this single comparison.

The experiments have clear boundaries. The symmetry is always known in
advance, and every model predicts one step ahead. The hardware run shows
the effect survives on a real device, using a small advection task, eight
sensors, ten reservoir draws, and 256 shots per circuit; it says nothing
about larger sizes, faster runtimes, or other devices. The weather study
uses one variable on five latitude rings over five months; its error bars
capture only reservoir randomness, not variation across years, variables,
or regimes.

Several questions stay open. We do not iterate forecasts over many steps
or learn unknown symmetries from data. Our rings stop at twelve sensors,
sixteen in the noisy simulations. Groups richer than plain shifts may
behave differently, and a symmetry that holds only approximately comes
with no guarantee. Although orbit completion grows the measured list only
linearly, the number of measurement settings needed depends on the device
and the chosen observables, so measuring orbits cheaply remains open.

\section{Conclusion}
\label{sec:conclusion}
In quantum reservoir computing, the trained model sees nothing of the
quantum system except the measured numbers. A task symmetry can therefore
live in the quantum state and still be useless, lost on the way to the
readout. We showed where this loss happens and how to prevent it: measure
every observable together with all its shifted copies, and share one
prediction rule across sensors. The design built this way consistently
outperforms misaligned alternatives in simulation, on IBM hardware, and on
real weather data. The broader lesson is simple. Quantum states,
measurements, and classical learning should be designed together, so that
the structure the physics provides is still there where the learning
actually happens.

\appendices
\section{Reproducibility}
The companion repository at \url{https://github.com/eybmits/qrc-symmetry}
contains everything needed to reproduce each reported number and figure.
It provides complete circuit and encoding definitions, parameter distributions,
observable-channel lists, qubit mappings, transpilation, backend and
calibration metadata, ridge grids, normalization and confidence-interval
procedures, random seeds, WeatherBench latitudes and window origins, and
the raw and derived hardware records.

\balance
\bibliographystyle{IEEEtranDoi}
\bibliography{references}
\end{document}